\def\BibTeX{{\rm B\kern-.05em{\sc i\kern-.025em b}\kern-.08em
    T\kern-.1667em\lower.7ex\hbox{E}\kern-.125emX}}
\newif\ifannotated
\begin{document}

\title{Fixed-Priority and EDF Schedules for ROS~2 Graphs on Uniprocessor}

\author{%
Oren Bell\inst{1} \and
Harun Teper\inst{2} \and
Mario G\"unzel\inst{2} \and
Chris Gill\inst{1} \and
Jian-Jia Chen\inst{2}
}%
\institute{%
Washington University in St Louis, USA,\\
\email{\{oren.bell, cdgill\}@wustl.edu}
\and
TU Dortmund University, Germany,\\
\email{\{harun.teper, mario.guenzel, jian-jia.chen\}@tu-dortmund.de}
}


\maketitle

\begin{abstract}
This paper addresses limitations of current scheduling methods in the Robot Operating System (ROS)~2, focusing on scheduling tasks beyond simple chains and analyzing arbitrary Directed Acyclic Graphs (DAGs). While previous research has focused mostly on chain-based scheduling with ad-hoc response time analyses, we propose a novel approach using the events executor to implement fixed-job-level-priority schedulers for arbitrary ROS~2 graphs on uniprocessor systems.

We demonstrate that ROS~2 applications can be abstracted as forests of trees, enabling the mapping of ROS~2 applications to traditional real-time DAG task models. Our usage of the events executor requires a special implementation of the events queue and a communication middleware that supports LIFO-ordered message delivery, features not yet standard in ROS~2. We show that our implementation generates the same schedules as a conventional fixed-priority DAG task scheduler, in spite of lacking access to the precedence information that usually is required. This further closes the gap between established real-time systems theory and ROS~2 scheduling analyses.
\end{abstract}


\section{Introduction}

The Robot Operating System (ROS)~2 is a widely adopted software framework for developing robotics applications, which has garnered significant attention from the real-time systems research community. While ROS~2 provides mechanisms for modular development using nodes and callbacks, its default scheduling capabilities present challenges for predictable real-time performance, particularly for complex applications involving branching or merging data flows (i.e., arbitrary graphs).

Since 2019, there has been ongoing research into modeling and analyzing the end-to-end timing behavior of ROS~2 applications. The default ROS~2 scheduler, called an \emph{executor}, has inherent limitations~\cite{casini2019response,tang2020response,blass2021ros,teper2022end,teper2023timing} that cause highly pessimistic response times for tasks within an application, compared to classical real-time periodic schedulers. There has been work to provide response time analysis~\cite{casini2019response,tang2020response,teper2022end} and strategies to minimize response times~\cite{tang2020response,blass2021ros,choi2021picas}.
The limitations of the default executor are discussed further in Section~\ref{sec:background}.


An alternate executor design, the events executor~\cite{eventexecutor}, was mainlined into ROS~2 in 2023. It is detailed further in Section~\ref{sec:events_executor}.
Prior work~\cite{teper2025reconciling} proposed applying classical real-time theory to the events executor. They showed that the events executor could be used to implement arbitrary periodic fixed-priority scheduling. They also provided some preliminary work providing a response-time analysis for basic chain applications.
Building upon this, our work addresses predictable scheduling for arbitrary ROS~2 graphs. We demonstrate mapping ROS~2 applications to standard DAG task models and propose a novel events queue implementation for the Events Executor. This enables any Limited Preemption Fixed Job-Level Priority (LP-FJP) scheduler -- e.g., Rate Monotonic (RM) or Earliest Deadline First (EDF).

\textbf{Contributions.}
In this paper, we
(i) provide a mapping showing that an arbitrary ROS~2 graph can be modelled as a set of DAG tasks for analysis purposes (Section~\ref{sec:system_model}), including showing that ROS~2 graphs are logically trees for the purposes of scheduling (Section~\ref{sec:graph_unfolding});
(ii) introduce an events queue implementation for the Events Executor that can realize any LP-FJP scheduler for the resulting DAG tasks (Section~\ref{sec:executor_design}), noting that our approach relies on LIFO-ordered message queues which are not yet standard in ROS~2 middleware; and
(iii) provide a formal analysis proving the correctness of this events queue mechanism (Section~\ref{sec:analysis}).
This work further closes the gap between established real-time systems theory and practical ROS~2 scheduling analysis, moving beyond the ad-hoc approaches and chain-structured applications of existing literature.

\section{Background}\label{sec:background}

The Robot Operating System (ROS)~2 provides a component-based framework for building robotic systems. A sample application is illustrated in Figure~\ref{fig:sample_application}. Key concepts include:
\textbf{Nodes} that encapsulate related functionalities (e.g., sensor driver, path planner); 
\textbf{Topics}, which are named communication channels facilitating data exchange via a publish-subscribe paradigm provided by the Data Distribution Service (DDS);
\textbf{Callbacks} are functions within nodes triggered by specific events, primarily 
\textbf{Timer Callbacks} that execute periodically based on a configured rate and 
\textbf{Subscription Callbacks} that execute upon receiving a message on a subscribed topic.
The sequence of callbacks triggered by data flowing through topics forms the application's processing logic, often creating implicit dependencies or chains of execution.

\begin{figure}
    \centering
    \includegraphics[width=\linewidth]{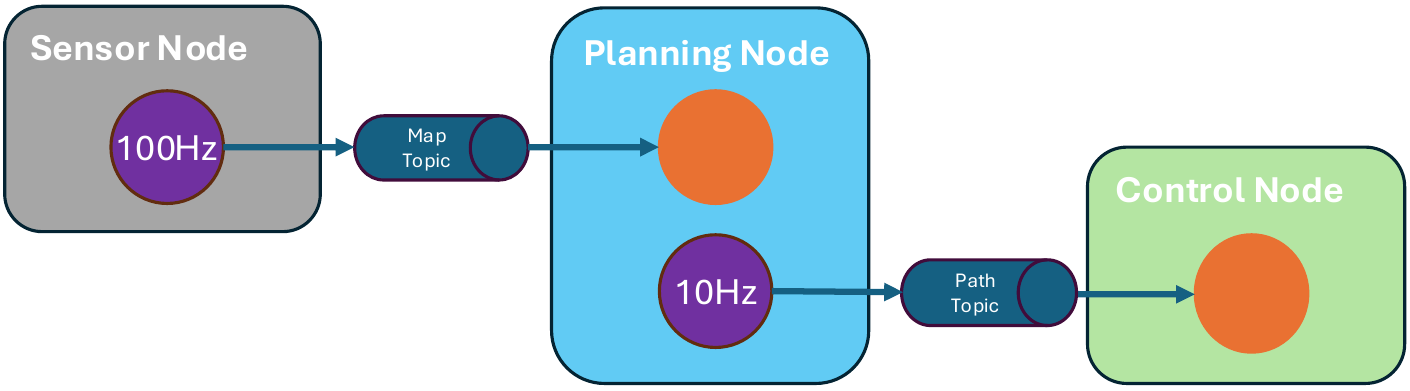}
    \caption{Sample ROS2 Application. Subscription Callbacks illustrated in orange, Timer Callbacks illustrated in purple. Topics are pipes between nodes.}
    \label{fig:sample_application}
\end{figure}


\subsection{Default ROS~2 Executor}
ROS~2 uses an entity called an \emph{executor} to manage the execution of callbacks. The default executor, often referred to as the `SingleThreadedExecutor', has several characteristics and limitations documented in prior work~\cite{casini2019response,tang2020response,blass2021ros,teper2022end,teper2023timing}.
The default ROS~2 executor uses a mechanism called a \textbf{Ready Set} to hold callbacks ready for execution. The ready set is not aware of individual instances of callbacks' releases. It only indicates if a callback is ready for execution, and not how many instances are in the backlog.
The ready set is populated only at specific \textbf{Polling Points}. Once populated, the executor processes callbacks from the set until it is empty before polling again.
The interval between two polling points is known as the \textbf{Processing Window}. Callbacks becoming ready \emph{during} this window must wait until the next polling point, potentially leading to significant delays and priority inversions, as lower-priority callbacks already in the ready set execute first.
The default ROS~2 executor employs a round-robin \textbf{Scheduling Policy} within the ready set, with a static bias favoring timer callbacks over subscription callbacks, and only executing each callback once per processing window, regardless of the number of instances released.
These limitations make it difficult to provide strong real-time guarantees, especially for complex applications.

All these concepts are illustrated in Figure~\ref{fig:example_ros2_default}. At time 0, three timer callbacks are released, and enqueued into the ready set. They are executed in round robin order until time 23, when the ready set is empty. This causes another polling point to occur, and another instance of callback $\tau_1$ is enqueued and executed. This is despite the fact that there were two instances of $\tau_1$ that should have executed. The second instance instance, release at time 20, is not acknowledged and implicitly dropped.


\subsection{The Events Executor}
\label{sec:events_executor}


To address the shortcomings of the default executor, the Events Executor was introduced~\cite{eventexecutor}. Key differences are as follows.
The Events Executor replaces the ready set with an \textbf{Events Queue}. Each triggering event (timer expiration, message arrival) generates a distinct entry in the queue, allowing multiple instances of the same callback to be queued independently if necessary.
The Events Executor uses \textbf{Continuous Polling} to check for new events (polls) after \emph{each} callback execution finishes, rather than waiting for the queue to become empty. This eliminates the problematic processing window of the default executor.
By default, the events queue operates in a \textbf{First-In, First-Out (FIFO)} manner.

This behavior is illustrated in Figure~\ref{fig:example_events}. Unlike with the default executor in Figure~\ref{fig:example_ros2_default}, the additional instance of task $\tau_2$ available at time 20 is acknowledged and enqueued into the events queue. No jobs are dropped because of the processing window.

Teper et al.~\cite{teper2025reconciling} showed that the structure of the events executor allows for easier application of standard real-time analysis (demonstrated for chains) and demonstrated that the events queue itself could be replaced to implement different scheduling policies. Our work builds directly on this potential.

\begin{figure*}
  \centering
  \subfloat[Under ROS~2 default executor. Polling points in dotted red line.\label{fig:example_ros2_default}]{
    \begin{tikzpicture}[yscale=0.4, xscale=0.15]
      \grid{0}{5}{30}{0.2}{5}
      \begin{scope}[shift={(0,4)}] 
        \taskname{$\tau_1$}
        
        \timeline{0}{31}{}
        
        \releases{0,10,30}
        \draw[->, \releasearrowprops, dotted, blue] (20,0) -- (20, \releasearrowlength);
        \exec{0}{3}
        \exec{23}{26}
      \end{scope}
  
      \begin{scope}[shift={(0,2)}] 
        \taskname{$\tau_2$}
        
        \timeline{0}{31}{}
        
        \releases{0,30}
        
        \exec{3}{13}
        
      \end{scope}
      
      \begin{scope}[shift={(0,0)}] 
        \taskname{$\tau_3$}
        
        \timeline{0}{31}{}
        \labelling{0}{30}{5}{0}
        
        \releases{0,30}
        
        \exec{13}{23}
      \end{scope}

      \draw[dotted, red, very thick] (0,-.7) -- (0,5.7);
      \draw[dotted, red, very thick] (23,-.7) -- (23,5.7);

      \end{tikzpicture}
  }\hfill
  \subfloat[Under events executor.\label{fig:example_events}]{
    \begin{tikzpicture}[yscale=0.4, xscale=0.15]
      \grid{0}{5}{30}{0.2}{5}
      \begin{scope}[shift={(0,4)}] 
        \taskname{$\tau_1$}
        
        \timeline{0}{31}{}
        \releases{0,10,20,30}
        \foreach \x in {20}{
          \draw[draw=blue, ->, \releasearrowprops] (\x,0) -- (\x, \releasearrowlength);
        }
        \exec{0}{3}
        \exec{23}{26}
        \exec[fill=blue]{26}{29}
      \end{scope}
  
      \begin{scope}[shift={(0,2)}] 
        \taskname{$\tau_2$}
        
        \timeline{0}{31}{}
        
        \releases{0,30}
        
        \exec{3}{13}
        
      \end{scope}
      
      \begin{scope}[shift={(0,0)}] 
        \taskname{$\tau_3$}
        
        \timeline{0}{31}{}
        \labelling{0}{30}{5}{0}
        
        \releases{0,30}
        
        \exec{13}{23}
      \end{scope}


      \end{tikzpicture}
  }
  \caption{Example schedules contrasting default ROS~2 executor with events executor}
\end{figure*}
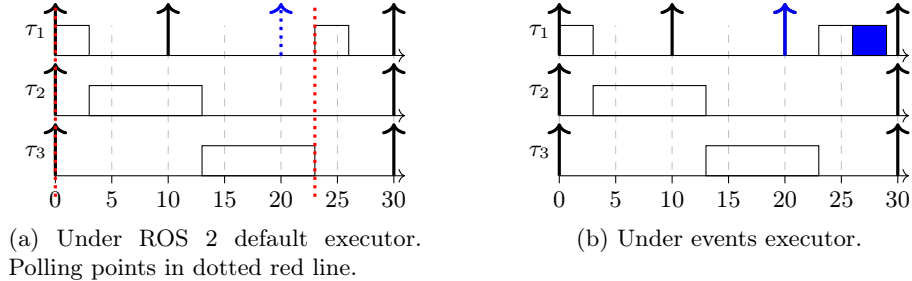

\section{Related Work}\label{sec:related_work}

\textbf{Executor Extensions and Replacements.} Recognizing the default executor's limitations, researchers proposed modifications and alternatives. PiCAS~\cite{choi2021picas} introduced a priority-driven chain-aware a priori work assignment algorithm, often requiring decomposition of the application across multiple executor instances. Other works explored different executor designs~\cite{arafat2022response}. The Events Executor~\cite{eventexecutor}, as discussed in Section~\ref{sec:background}, represents a significant step towards a more standard scheduling infrastructure within ROS~2.

\textbf{Bridging the Gap for General Graphs (Our Contribution).} While~\cite{teper2025reconciling} laid important groundwork and demonstrated the potential for chains, it did not provide a concrete mechanism or analysis for applying standard priority-based scheduling (like LP-FJP) to general ROS~2 \emph{graphs} (involving branching and merging). This paper fills that gap by: (1) presenting the graph unfolding technique to map ROS~2 graphs to standard DAG models for analysis, (2) designing a specific two-queue mechanism for the Events Executor's queue that equates to LP-FJP scheduling, and (3) providing a formal proof that this mechanism correctly schedules tasks according to their job-level priorities \emph{without} needing explicit runtime access to the graph structure, relying instead on implicit priority inheritance. This allows direct application of existing LP-FJP DAG scheduling analysis, such as~\cite{nasri2019response}.


\section{System Model}\label{sec:system_model}

In this section, we discuss the unique manner in which ROS~2 schedules callbacks, and the assumptions we must make before mapping applications to a conventional DAG task model, detailed next.

\subsection{Task Model}\label{sec:task_model}

Given a set $\mathbb{T} = \{\tau_1, \tau_2, ..., \tau_n\}$ of tasks, a \textbf{periodic DAG task} $\tau_i$ is specified by the tuple $\tau_i = (T_i, D_i, \phi_i) \in \mathbb{R}^3$, where $T_i>0$ is the period, $D_i > 0$ is the relative deadline, and $\phi_i$ is the phase. The periodic task $\tau_i$ releases its first job (task instance) at time~$\phi_i$, and subsequent jobs are released every $T_i$ time units.  Every job has an absolute deadline specified as its release time plus the relative deadline.

A \textbf{sporadic DAG task} $\tau_i$ is similarly defined as the tuple $\tau_i = (T_i, D_i) \in \mathbb{R}^2$, where $T_i$ is instead the minimum interarrival time between two successive jobs. The \emph{periodic task model} is a specialization of the \emph{sporadic task model}.

Each DAG task is associated with a graph $G_i = (V_i, E_i)$. The nodes in the finite set $V_i = \{v_{i1}, v_{i2}, ...\}$ model non-preemptible portions of work, or \emph{subtasks}, in the task $\tau_i$. Each subtask $v_{ij}$ has a worst-case execution time (WCET) $w_{ij}$ and may have precedence constraints modeled by the relations in the finite set $E_i = \{e_{i1}, e_{i2}, ...\}$. A subtask is released when all of its precedence constraints (if any exist) are met. We define a precedence relation as $v_{ij} \prec v_{ik}$ and say that $v_{ij}$ is a \textbf{parent} of $v_{ik}$.

Each subtask $v_{ij}$ has an associated amount of work, $W_{ij}$, which is the WCET the subtask needs to complete. We say a \textbf{job} of task $\tau_i$ is complete when one instance each of $v_{ij} \in V_i$ are complete. 
%
We focus specifically on DAG task schedulers with \emph{limited preemption} and \emph{fixed job-level priority}.
\emph{Limited preemption} (LP) means that individual subtasks cannot be preempted, but a task can be preempted in between constituent subtasks. We take this approach to conform to the natural behavior of ROS~2 executors, which will not preempt running executables. Thus, the only possible preemption points are at the completion of subtasks.
\emph{Fixed job-level priority} (FJP) means that a task and all its constituent subtasks have the same priority for a given job instance. The priority between two job releases of the same task may vary.
All fixed-priority schedulers are naturally also fixed for individual jobs. Earliest Deadline First (EDF) is an example of a dynamic-priority scheduler whose priorities are fixed for individual jobs, i.e., it is FJP.

\subsection{Mapping ROS~2 to DAG Task Model}
As in prior literature\cite{casini2019response,tang2020response,choi2021picas}, we abstract the dataflow paths of a ROS~2 application into a graph structure. To illustrate, consider an example ROS~2 application with a couple sensor nodes that take periodic measurements, and publish their data to topics subscribed by downstream computation nodes, like sensor fusion, mapping, planning, etc. The sensors are time triggered and so associated with timers. All data-driven tasks are associated with subscriptions.

Within the graph structure, each callback (timer, subscription, etc) forms a subtask, which may be released periodically (in the case of timers), or data-driven (in the case of subscriptions). During the course of a its execution, a subtask $\tau_i$ may publish a message to a topic, and any subscribers to that topic, $\tau_j$ will be modeled as children of $\tau_i$.

We assume that ROS~2 applications do not have such loops, enabling the connection to DAGs. Given no restrictions on the system design, it is possible for subscriptions to send messages to themselves or form a loop in which data is propagated from one subscription through other subscriptions and back to itself. But if a subscription is allowed to continually trigger itself, it creates a positive feedback loop of exploding message bandwidth, so this is strongly discouraged by good ROS~2 design principles.
After establishing these precedence constraints, these subtasks can then be arranged into \emph{DAG tasks}. In the next section, we will demonstrate they have a tree structure, and so a singular subtask can be described as the root. These \emph{DAG tasks} are modeled as \emph{periodic tasks} when a timer is at the root, and externally-driven \emph{sporadic tasks} when a subscription is at the root. 

In the existing DAG scheduling literature, the term \emph{node} is synonymous with \emph{subtask}~\cite{serrano2016response,verucchi2023survey,nasri2019response,DBLP:books/sp/Buttazzo24,fonseca2015multi}.
To avoid confusion, we will strictly use the ROS~2 definition of \emph{node}, which is an object that serves as a programmatical abstraction to bundle executable entities under a unified namespace. \emph{Subtask} will be used exclusively to refer to callbacks in the context of scheduling.

We only concern ourselves with dataflow that occurs through the Data Distribution Service (DDS), a publish-subscribe communication middleware used by default in a ROS~2 based system. Any two subtasks that communicate with shared memory (as is the case with some \emph{fusion node}\footnote{A fusion node is a component with a single output that combines data from multiple inputs. Unlike many-to-one topics, the output topic operates at an independent rate, or a rate matching only one of the inputs.} implementations) are not necessarily modeled as being in the same task. A precedence constraint is only imposed when the execution of one subtask triggers the release of another. Any subtasks that exchange information asynchronously, bypassing the pub/sub mechanism, are not modelled by our work.

Code within callbacks can be fairly arbitrary. In order for the analysis of a ROS~2 application to be feasible, we must impose some constraints on its behavior:
(1) children are eligible for release as soon as parents finish, so no inter-subtask latency needs to be modeled; 
(2) no new threads are created;
(3) no blocking on IO reads occur; and
(4) more generally, all execution times of callbacks are bounded.

Conventional real-time schedulers stipulate that child subtasks are eligible for execution when their parents complete, and thus imposing this requirement is needed to comply with the analysis of a LP-FJP scheduler. We restrict the callback's ability to spawn new threads, since the scope of this paper is limited to the uniprocessor case. According to the model we established in \ref{sec:task_model}, subtasks have a finite WCET. Therefore anything capable of an unbounded delay is not allowed.


One of the most unique aspects of modeling ROS~2 applications as a DAG task is that precedence constraints are logically disjunctive: a subtask will execute a job for \emph{each} execution of \emph{each} of its parents. This is not in line with the conventional DAG task model we laid out above, until we apply the abstraction of virtually unfolding the graph into a forest of trees, described next.

\subsection{Graph Unfolding}\label{sec:graph_unfolding}

Next, we show that ROS~2 applications that can be modeled as DAGs can be converted to trees due to the logically disjunctive precedence constraints.
In conventional DAGs, subtasks will wait for all of their parents before being eligible for execution. In ROS~2, however, when multiple parents publish to the same subtask, the child will execute once for each parent event. To model this behavior \emph{for analysis purposes}, we adapt the concept of graph unfolding~\cite{blum2019language} where we virtually duplicate each subtask so that each resulting virtual subtask instance has exactly one parent.



This transformation is illustrated in Figure~\ref{fig:ros2_unfolding}. This creates a tree (or rather a forest, since there can be multiple trees). We posit some properties of this new representation:
(i) each subtask has either 1 or 0 parents;
(ii) each subtask is the root of a unique subtree;
(iii) subtasks with 0 parents will be the root of a tree; and
(iv) root subtasks can be modelled sporadically.

To justify the last point, we note that timer tasks trigger periodically, and it's assumed that parentless subscription tasks are triggered by some external event for which a minimum inter-arrival time is known. This assumption is in line with prior literature\cite{tang2020response}.
As previously mentioned, this transformation is possible for ROS~2 applications that are free of cycles and can be modeled as DAGs.

\begin{figure}
    \centering
    \subfloat[before unfolding]{\label{fig:ros2_unfolding-a}
        \includegraphics[width=0.47\linewidth]{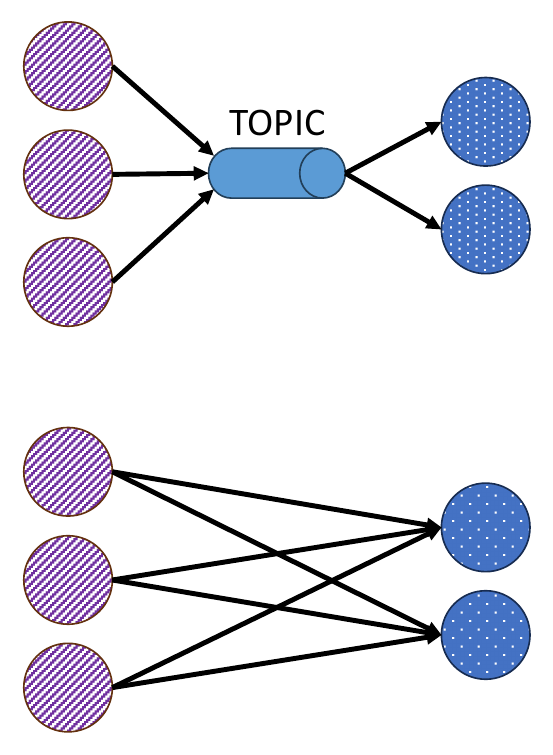}
    }
    \subfloat[after unfolding]{\label{fig:ros2_unfolding-b}
        \includegraphics[width=0.3\linewidth]{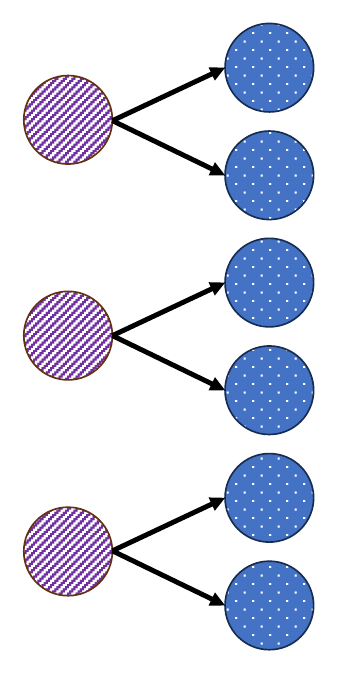}
    }
    \caption{Unfolding of a ROS~2 Graph into a Forest. Timers in purple, subscriptions in blue}
    \label{fig:ros2_unfolding}
\end{figure}


\subsection{Fusion Nodes}

A common pattern in ROS~2 involves \emph{fusion nodes}, which combine data from multiple input topics. While the previous sections already sufficient for general DAGs, we present special commentary for the modeling of fusion nodes in graph unfolding. This depends on their implementation:
\begin{itemize}
    \item \textbf{Trigger-Based Fusion.} If a node subscribes to multiple topics and its single callback executes upon message arrival, potentially varying its behavior or execution time based on which topic triggered it or conditionally publishing, our unfolding process (Section~\ref{sec:graph_unfolding}) still applies. The fusion node is duplicated for each incoming topic edge. While this correctly models the potential execution paths for analysis, applying the node's full WCET to each unfolded instance can lead to pessimistic (safe, but potentially loose) response time bounds if the actual execution is conditional or significantly shorter for certain triggers.
    \item \textbf{Timer-Based Fusion.} If a node subscribes to multiple topics, buffers the data internally, and uses a separate periodic timer callback to process the fused data and publish it, our model handles this naturally. The subscriber callbacks form the ends of their respective upstream DAGs. The timer callback forms the root of a \emph{new} separate DAG, as its execution is triggered by its timer, not directly by the arrival of input messages via DDS.
\end{itemize}

Now that we've established the mapping to a DAG task model, we next describe how to make use of the events executor to create a fixed-job-level-priority scheduler.


\section{Executor Design}\label{sec:executor_design}

Prior work \cite{teper2025reconciling} demonstrated the usefulness of the new ROS~2 events executor for scheduling task chains on a uniprocessor. We now present a queueing mechanism for the events executor that enables ROS~2 to schedule general DAG tasks as well, with fixed-job-level-priority scheduling. 

In \cite{teper2025reconciling}, a scheduler segregates a release thread from a scheduler thread. During DDS events, received messages insert an event object into a priority queue (also introduced in \cite{teper2025reconciling}).
%
The decision about which subtask to release depends entirely on the implementation of the delegator's priority queue. This is one of the primary benefits the events executor has over the original ROS~2 executor. By default, it is a FIFO scheduler. However, by replacing this events queue with compatible alternative implementations, other scheduling paradigms can be created.
%
%
%
%
In a traditional first-task-priority scheduler (FTP), all subtasks in the graph (in our case, chain/tree) inherit the same priority as their parent. In a pub/sub system such as ROS~2, the linkage between parent and child is not readily apparent. Even if a ROS~2 executor tracked the relation between publisher and subscriber, a single topic can have multiple publishers. So a subtask may have multiple parents, with no obvious way to distinguish which triggered the release of an individual job.


We propose a queueing mechanism that allows children to infer which parent triggered their execution and inherit the appropriate priority. It consists of 2 queues:

\begin{itemize}
    \item \textbf{Priority queue for root subtasks.} Released timer subtasks are kept in a min-heap (or max-heap) sorted by some fixed priority, such as their period for a rate-monotonic scheduler. This queue may include externally driven subscribers, if there is a mechanism to specify their priority.
    It is referenced in Algorithms~\ref{alg:release_logic} and \ref{alg:scheduling_logic} as $root\_queue$
    \item \textbf{LIFO queue for child subtasks.} All other subtasks, like subscriptions and services, are held in a LIFO queue. Each entry is paired with a priority, which is initially undefined, but is set on the next scheduling decision. This queue is also sorted by priority, per the discussion in Section~\ref{sec:analysis}.
    It is referenced in Algorithms~\ref{alg:release_logic} and \ref{alg:scheduling_logic} as $child\_queue$
\end{itemize}

The release logic is simply sorting a new subtask into the appropriate queue, as shown in Algorithm \ref{alg:release_logic}. Root subtasks are assumed to have some known priority (based on a period, deadline, or some fixed value), and child subtasks leave this value undefined.

\begin{algorithm}
\caption{Release Logic}
\begin{algorithmic}[1]\label{alg:release_logic}
    \IF{event.priority is defined \emph{(event is timer)}}
        \STATE root\_queue.push(event)
    \ELSIF{event.priority is undefined \emph{(event is subscription)}}
        \STATE event.priority = latest\_priority
        \STATE child\_queue.push(event)
    \ENDIF
\end{algorithmic}
\end{algorithm}

The scheduling logic is detailed in Algorithm \ref{alg:scheduling_logic}. 
Whenever a job is scheduled, its priority is recorded. This is assumed to the be the priority of all subscribers released until the next scheduling decision. Any new subscribers that get released will inherit this priority at the next scheduling decision.

\begin{algorithm}
\caption{Scheduling Logic}
\begin{algorithmic}[1]\label{alg:scheduling_logic}
    
    \IF{root\_queue.top.priority $>$ child\_queue.top.priority} 
        \STATE latest\_priority = root\_queue.top.priority 
        \STATE schedule root\_queue.pop() 
    \ELSE 
        \STATE latest\_priority = child\_queue.top.priority 
        \STATE schedule child\_queue.pop() 
    \ENDIF
\end{algorithmic}
\end{algorithm}

When a subtask is scheduled, the DDS communication layer, which manages messages and communication, is queried for the message data to pass to the subtask's callback.
%
To ensure that the correct instance of a subscriber is released, the underlying DDS layer also should be configured to release messages in LIFO ordering. Subscription events are released in LIFO order, but the messages they operate on are only passed to them after scheduling. Thus, it is crucial that the underlying DDS message queues are also LIFO ordered, since otherwise multiple instances of a single subscription may be assigned the wrong message at scheduling. This would be equivalent to running different jobs of a subtask out of order, which is undesirable.

\section{Analysis}\label{sec:analysis}

In this section, we show that a ROS~2 application running on the events executor with our specialized queue implementation is analytically equivalent to a fixed-priority tree scheduler on a uniprocessor system. That is, the application will generate the same schedule as a conventional fixed-task-priority tree scheduler if given the unfolded equivalent of a ROS~2 application's graph.
%
We focus on limited preemption because ROS~2 executors cannot be preempted during the execution of a subtask, but subtasks from different trees are allowed to interleave, so the task as a whole has limited preemption.

We first assume that subtasks at the root of a tree have some known integer-valued priority for each job release. This may be a fixed priority, or can vary between job releases (such as a deadline for EDF). This priority is inherited by all subtasks.
%
These root subtasks are naturally the first to run. In the following lemmas, we will lay out the natural ordering of scheduled subtasks and the priorities they have in association with each other.

\begin{lemma}\label{lemma:root_subtask_eligibility}
When idle, the only subtasks that may be eligible for release are the root subtasks of each tree.
\end{lemma}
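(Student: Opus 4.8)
The plan is to argue directly from the unfolding properties of Section~\ref{sec:graph_unfolding}, the release semantics of child subtasks, and the uniprocessor assumption. After unfolding, property~(i) guarantees that every subtask has either $0$ or $1$ parent, and property~(iii) identifies the parentless subtasks as the tree roots. The crux is that a non-root subtask can only be \emph{born} by the completion of its unique parent, whereas a root subtask is triggered externally; during an idle interval no subtask is executing, so no parent can be finishing, and hence no child can be released.

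First I would fix notation: call a subtask a \emph{child} if it has a parent and a \emph{root} otherwise, and take ``idle'' to mean that the processor is executing no subtask. I would then combine assumption~(1) from the mapping (children are eligible for release as soon as their parent finishes) with the precedence relation $v_{ij} \prec v_{ik}$ to pin down the release event of a child: a child $v_{ik}$ with unique parent $v_{ij}$ becomes eligible for release exactly at the instant $v_{ij}$ completes its execution.

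Next I would argue by contradiction. Suppose that during an idle interval some child subtask $v_{ik}$ becomes eligible for release. By the previous step, this requires its parent $v_{ij}$ to complete execution at that instant, which in turn requires $v_{ij}$ to have been running on the processor immediately beforehand. But on a uniprocessor, ``idle'' means precisely that nothing is executing, so no such $v_{ij}$ can be completing. This contradiction shows that no child can become eligible while idle. Conversely, a root subtask is triggered by a timer (periodic) or an external sporadic event, both of which are independent of processor activity, so roots remain the only candidates; thus, when idle, only root subtasks may be eligible for release.

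The main obstacle I anticipate is handling the boundary instant at which the processor transitions into idle. One must be careful that the event ``a parent finishes and releases its children'' is not itself counted as part of the idle interval: at that instant the processor was just active executing the parent, and the newly released children are immediately enqueued (Algorithm~\ref{alg:release_logic}), so the scheduler either dispatches one of them or goes idle only if no eligible work remains. Making this timing precise---so that ``eligible for release while idle'' cannot secretly include a child released by a just-completed parent---is the delicate part; the remainder follows immediately from the tree structure and the uniprocessor constraint.
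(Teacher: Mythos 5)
Your proposal is correct and follows essentially the same route as the paper's proof: children can only become eligible at the instant their parent completes, no parent can be completing while the processor is idle, and roots are triggered by timers or external events independent of processor activity. Your contradiction framing and the care about the boundary instant are slightly more explicit than the paper's three-sentence argument, but the underlying idea is identical.
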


\begin{proof}
Since the system idles, the two queues specified in Algorithms \ref{alg:release_logic} and \ref{alg:scheduling_logic} are empty, by definition of a greedy scheduler. Since we stipulate that all child subtasks are released as soon as their parent complete, no new child tasks will become eligible after the last task has completed. Since the root subtasks are triggered periodically, they are the only ones that can be released when no other subtasks are running.
\end{proof}







\begin{lemma}\label{lemma:child_parent_priority_equivalence}
Any newly released child subtask will have a priority equal to the subtask that just finished execution.
\end{lemma}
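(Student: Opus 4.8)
The plan is to track the variable \emph{latest\_priority} and establish an invariant relating its value to the most recently dispatched subtask. First I would note that \emph{latest\_priority} is written in exactly two places, lines~2 and~5 of Algorithm~\ref{alg:scheduling_logic}, i.e.\ precisely at each scheduling decision, and in both cases it is assigned the priority of the subtask about to be dispatched. It is read in exactly one place, line~4 of Algorithm~\ref{alg:release_logic}, where a newly released child copies it into its own (previously undefined) priority field. Thus the whole question reduces to: which scheduling decision last wrote \emph{latest\_priority} at the moment a given child is released?

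Next I would invoke limited preemption together with the uniprocessor assumption. Once a subtask $v$ is dispatched, it executes to completion with no intervening scheduling decision, since subtasks are non-preemptible and there is a single processor. Consequently, from the instant $v$ is dispatched until the next invocation of Algorithm~\ref{alg:scheduling_logic}, the value of \emph{latest\_priority} is frozen at the priority of $v$.

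Then I would combine this with the release-timing constraint~(1) of Section~\ref{sec:system_model} and the continuous-polling semantics of the events executor. When $v$ finishes, the executor polls, observes the messages $v$ published, and runs Algorithm~\ref{alg:release_logic} for each triggered child \emph{before} the next dispatch. Each such child therefore reads \emph{latest\_priority} while it still equals the priority of $v$, and since $v$ is exactly ``the subtask that just finished execution,'' the child inherits $v$'s priority, as claimed.

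The main obstacle is pinning down the relative ordering of three events: (a)~dispatching $v$ and setting \emph{latest\_priority}; (b)~$v$ finishing and its children being enqueued by Algorithm~\ref{alg:release_logic}; and (c)~the next call to Algorithm~\ref{alg:scheduling_logic} overwriting \emph{latest\_priority}. The lemma holds precisely because (b) strictly precedes (c): the events executor performs its release and enqueue step in the poll immediately following a callback, before the subsequent dispatch. I would make this ordering explicit, since if a later scheduling decision could intervene before the child reads \emph{latest\_priority}, the child would inherit the wrong priority.
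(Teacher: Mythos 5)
Your proof is correct but argues a genuinely different half of the claim than the paper does. The paper's proof is a model-level argument: since constraint~(1) of Section~\ref{sec:system_model} forces children to be released exactly when their parent completes, the subtask that just finished \emph{must be} the parent of any newly released child, and FJP inheritance then says the child's priority equals the parent's. Your proof instead is an operational argument about the mechanism: you track \emph{latest\_priority} through Algorithms~\ref{alg:release_logic} and~\ref{alg:scheduling_logic}, use non-preemptibility on a uniprocessor to show it is frozen between dispatch and completion of a subtask, and pin down the event ordering (release and enqueue strictly before the next scheduling decision) so that the child reads the just-finished subtask's priority. What your approach buys is a rigorous justification of the \emph{implementation} -- that the assigned priority really is that of the just-finished subtask -- which the paper's one-line proof essentially takes for granted; the ordering concern you flag in your final paragraph is exactly the point on which the whole implicit-inheritance scheme could fail, and making it explicit strengthens the argument. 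What the paper's approach buys, and what your proof omits, is the identification of the just-finished subtask as the \emph{parent}: without that step, you have shown the child receives some well-defined priority, but not that it is the priority FJP semantics demand it inherit. A complete treatment needs both halves; you should append the paper's observation that immediate release implies the finisher is the parent, so that the value your mechanism assigns is also the semantically correct one.
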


\begin{proof}
In our system model we require that children are released immediately upon the parent finishing and no later. Therefore it would be contradictory to assume anything other that the subtask that just finished is the parent of the child subtask that was released. By definition of FJP, a child inherits the priority of its parent.
\end{proof}

\begin{lemma}\label{lemma:released_child_priority}
Any unscheduled subtask, $\tau_a$, in the events queue must have a priority equal or lesser than the currently running subtask, $\tau_b$.
\end{lemma}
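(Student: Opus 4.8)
The plan is to prove the statement as an inductive invariant maintained across the sequence of scheduling decisions produced by Algorithm~\ref{alg:scheduling_logic}. Concretely, I would show: \emph{at the instant a subtask $\tau_b$ is popped and dispatched, every subtask still held in the root queue or the child queue has priority at most that of $\tau_b$.} The base case follows from Lemma~\ref{lemma:root_subtask_eligibility}: the first dispatch out of an idle state draws only from the root queue, which is a priority heap, so the popped root is the maximum and all survivors are no greater.

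The core of the inductive step rests on two facts about Algorithm~\ref{alg:scheduling_logic}. First, both queues are ordered by priority---the root queue is an explicit heap, and the child queue is maintained in priority order (Section~\ref{sec:analysis})---so the top of each queue is the highest-priority element it contains. Second, the scheduling rule dispatches the larger of the two tops. Hence immediately after $\tau_b$ is popped, the new tops of both queues, and therefore all remaining elements, have priority at most that of $\tau_b$. This establishes the invariant at each scheduling instant almost directly.

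It then remains to argue the bound is preserved \emph{throughout} $\tau_b$'s execution, not merely at its dispatch. Here I would invoke the uniprocessor and limited-preemption assumptions: while $\tau_b$ runs, no other subtask completes, and by the system-model requirement that children are released exactly when their parent finishes, no child subtask enters the child queue during the open execution interval of $\tau_b$. Thus the child queue is frozen while $\tau_b$ runs, and every element in it retains priority at most that of $\tau_b$. At the moment $\tau_b$ completes, the only new children released are $\tau_b$'s own, and by Lemma~\ref{lemma:child_parent_priority_equivalence} each inherits exactly $\tau_b$'s priority, so they enter at priority \emph{equal} to the just-finished $\tau_b$, preserving the ``equal or lesser'' clause and setting up the next induction step.

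I expect the delicate point to be the root queue, which---unlike the child queue---is not frozen during execution: timer expirations and sporadic external arrivals can push new root subtasks mid-execution, and a fixed-priority root may exceed $\tau_b$'s priority. I would reconcile this by observing that such an arrival is precisely the limited-preemption blocking scenario, in which the reference fixed-priority tree scheduler likewise defers the higher-priority root until $\tau_b$'s non-preemptible subtask finishes; consequently it does not disturb the schedule equivalence this section is building toward. The cleanest formulation, which I would adopt, states the invariant over the set of subtasks eligible at the dispatch decision, where the bound holds without exception and is exactly what the subsequent equivalence argument consumes.
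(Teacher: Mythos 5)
Your proposal is correct in substance and arrives at the same underlying fact as the paper, but it is structured quite differently and is considerably more careful. The paper's proof is a two-sentence contradiction argument: if $\tau_a$ had higher priority it would have been scheduled at the last decision point, and ``no new events would have been enqueued by Algorithm~\ref{alg:release_logic} since the last scheduling decision.'' You instead run an explicit induction over scheduling decisions, and in doing so you expose the weakness in that second sentence: it is only true for the \emph{child} queue. Timer expirations (and sporadic external arrivals) can push new root subtasks during $\tau_b$'s nonpreemptive execution, and such a root may well have higher priority than $\tau_b$ --- this is exactly the limited-preemption blocking case, and it means the lemma as literally stated (``any unscheduled subtask in the events queue'') does not hold mid-execution. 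Your proposed repair --- stating the invariant at the dispatch instant, or equivalently restricting it to the child queue --- is the right one, and it is all that Lemma~\ref{lemma:lifo_is_priority_insertion} (the only consumer of this lemma) actually needs, since that lemma compares newly released children of $\tau_b$ against previously released children. The paper sweeps this under the rug; you name it and fix it, which is a genuine improvement.

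One caution on your inductive step: you justify ``the top of the child queue is its highest-priority element'' by appealing to the child queue being maintained in priority order, but in the paper that fact is the \emph{conclusion} of Lemmas~\ref{lemma:lifo_is_priority_insertion} and~\ref{lemma:lifo_is_priority}, whose proofs cite the present lemma. As written this looks circular. Your induction in fact escapes the circularity, because your step at $\tau_b$'s completion re-establishes the ordering property (newly enqueued children enter at priority exactly $P_b$, which dominates everything already present), so you are really proving the dispatch-instant invariant and the child-queue ordering \emph{jointly} by simultaneous induction. You should say so explicitly rather than citing the ordering as an external fact; with that framing made explicit, your argument is sound and arguably places the whole chain of Lemmas~\ref{lemma:released_child_priority}--\ref{lemma:lifo_is_priority} on firmer footing than the paper's sequential presentation, at the cost of blurring the boundaries between them.
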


$$P_a \leq P_b$$

\begin{proof}
By contradiction:
Otherwise $\tau_a$ would have been scheduled instead. Since our scheduler is nonpreemptive, no new events would have been enqueued by Algorithm \ref{alg:release_logic} since the last scheduling decision.
\end{proof}

\begin{lemma}\label{lemma:lifo_is_priority_insertion}
Any element placed into the children queue from Algorithm \ref{alg:release_logic} will always be the highest priority element.
\end{lemma}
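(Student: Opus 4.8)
The plan is to derive the statement directly from Lemmas \ref{lemma:child_parent_priority_equivalence} and \ref{lemma:released_child_priority}, showing that the priority carried by any freshly inserted child weakly dominates every priority already resident in $child\_queue$. First I would fix the instant at which Algorithm \ref{alg:release_logic} pushes a new child subtask $\tau_c$. By the system-model assumption that children are released exactly when their parent finishes, this push can only occur at the moment some running subtask $\tau_b$ completes. Because the scheduler is nonpreemptive, no scheduling decision has been taken since $\tau_b$ was dispatched, so $latest\_priority$ still holds the value $P_b$ recorded for $\tau_b$ in Algorithm \ref{alg:scheduling_logic}; hence, in line with Lemma \ref{lemma:child_parent_priority_equivalence}, $\tau_c$ is assigned priority $P_c = P_b$.

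Next I would bound the priorities of the entries already present. Immediately before $\tau_b$ completes, every other entry of $child\_queue$ is an unscheduled subtask sitting in the events queue, so Lemma \ref{lemma:released_child_priority} applies and gives $P_a \le P_b$ for each such entry $\tau_a$. Combining this with $P_c = P_b$ yields $P_a \le P_c$ for every pre-existing entry, which is precisely the statement that $\tau_c$ is a (weakly) highest-priority element of $child\_queue$ at the instant of its insertion.

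The one point that needs care -- and which I expect to be the main obstacle -- is reconciling ``highest priority'' with the LIFO discipline in the presence of ties. When a parent has several children, each is pushed in turn and all inherit the same value $P_b$, so the inequality $P_a \le P_c$ is non-strict and several entries may share the maximal priority. I would argue this is harmless: the LIFO stack places the most recently pushed child on top, and since that child's priority equals the maximum over the queue, the top of the queue always coincides with a maximum-priority entry. This is exactly what makes the LIFO order and the priority order agree, so the element selected in Algorithm \ref{alg:scheduling_logic} never violates priority ordering. I would close by noting that this gives the inductive invariant underpinning the queue's correctness: every insertion enters at the top with the largest priority seen so far, hence the stack remains sorted by priority from top to bottom, justifying the treatment of the LIFO queue as a priority-sorted queue in Section~\ref{sec:analysis}.
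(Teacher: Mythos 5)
Your proof is correct and takes essentially the same route as the paper's: apply Lemma~\ref{lemma:child_parent_priority_equivalence} to get $P_c = P_b$ for the freshly released child, then Lemma~\ref{lemma:released_child_priority} to get $P_a \leq P_b$ for every entry already in the queue, hence $P_a \leq P_c$. Your closing discussion of ties is a welcome extra detail, and it matches the paper's remark immediately after the lemma that ties are broken in LIFO order by release eligibility.
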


\begin{proof}
Any child subtasks, $\tau_c$, released at the completion of subtask $\tau_b$ must have the same priority as subtask $\tau_b$, by Lemma~\ref{lemma:child_parent_priority_equivalence}.

$$P_b = P_c$$

Therefore, applying Lemma~\ref{lemma:released_child_priority}, any already released subtask $\tau_a$ must have a lower priority.

$$\forall \tau_a, \forall \tau_c \succ \tau_b | P_a \leq P_b$$

$$\implies P_a \leq P_c $$

\end{proof}

We can impose the rule that ties are broken in LIFO order, based on release eligibility.

\begin{lemma}\label{lemma:lifo_is_priority}
Any element removed from the LIFO queue will always be the highest priority element.
\end{lemma}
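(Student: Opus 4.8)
The plan is to establish an inductive invariant on the contents of the LIFO child queue: at every scheduling decision, reading the queue from bottom (least recently pushed) to top (the element that Algorithm~\ref{alg:scheduling_logic} will pop) yields a non-decreasing sequence of priorities, so that the top element is always one of maximal priority. Since the scheduling logic removes exactly the top of the child queue, establishing this invariant gives the claim directly.

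First I would set up an induction over the sequence of queue operations, namely the pushes performed in Algorithm~\ref{alg:release_logic} and the pops performed in Algorithm~\ref{alg:scheduling_logic}. The base case is the empty queue, which satisfies the invariant vacuously. For the inductive step there are two operations to consider. For a push, Lemma~\ref{lemma:lifo_is_priority_insertion} guarantees that the element being inserted has priority at least as large as every element currently in the queue; since the LIFO discipline places it on top, the resulting queue still reads non-decreasingly from bottom to top and its top carries the maximal priority. For a pop, removing the top of a bottom-to-top non-decreasing stack leaves a shorter stack that is still non-decreasing and whose new top remains maximal among the survivors. Hence the invariant is preserved by both operations.

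With the invariant in hand the conclusion is immediate: whenever Algorithm~\ref{alg:scheduling_logic} removes an element from the child queue it removes the top, which by the invariant has the highest priority among all queued children. The tie-breaking rule stated just before the lemma---that equal priorities are resolved in LIFO order of release eligibility---is automatically consistent with this, since among equal-priority elements the most recently pushed sits nearest the top and is therefore selected first.

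I expect the only genuine subtlety to be the interleaving of releases: a single parent completion may enqueue several sibling children at once (one per outgoing edge), all sharing the parent's priority. I would handle this by applying Lemma~\ref{lemma:lifo_is_priority_insertion} to each such push in turn, observing that equal-priority insertions preserve the non-decreasing ordering and that the LIFO resolution among siblings is exactly the permitted tie-break. Beyond this bookkeeping the argument is a direct consequence of Lemma~\ref{lemma:lifo_is_priority_insertion} combined with the stack semantics of the queue, so the proof should be short.
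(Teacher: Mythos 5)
Your proof is correct and takes essentially the same route as the paper's, which simply states that the result follows from Lemma~\ref{lemma:lifo_is_priority_insertion} combined with the definition of LIFO. You have merely made explicit the inductive invariant (bottom-to-top non-decreasing priorities, preserved by pushes via Lemma~\ref{lemma:lifo_is_priority_insertion} and trivially by pops) that the paper's one-line argument leaves implicit, along with the equal-priority sibling tie-break.
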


\begin{proof}
Lemma~\ref{lemma:lifo_is_priority_insertion} combined with definition of LIFO.
\end{proof}

\begin{theorem}\label{theorem:executor_works}
The highest-priority subtask will always be scheduled next by Algorithm \ref{alg:scheduling_logic}.
\end{theorem}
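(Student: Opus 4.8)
The plan is to combine the structural guarantees of the two queues established in the preceding lemmas and reduce the theorem to a single comparison. First I would observe that, by Lemma~\ref{lemma:root_subtask_eligibility} together with the release logic of Algorithm~\ref{alg:release_logic}, every eligible subtask resides in exactly one of the two queues: root subtasks in $root\_queue$ and all child subtasks in $child\_queue$. Thus the set of scheduling candidates is partitioned cleanly between the two structures, and the globally highest-priority eligible subtask must be whichever of the two queue tops has the higher priority. This partition is the conceptual backbone of the argument: it lets me replace a comparison over all queued subtasks with a comparison of just two representatives.

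Next I would argue that each queue top is the true priority-maximum of its queue. For $root\_queue$ this is immediate, since it is maintained as a priority heap and so its top is by construction the highest-priority queued root subtask. For $child\_queue$ the corresponding fact is precisely the content of Lemma~\ref{lemma:lifo_is_priority}: although the structure is a LIFO stack rather than an explicit heap, the inheritance discipline forces each newly inserted element to dominate all earlier ones in priority (Lemma~\ref{lemma:lifo_is_priority_insertion}), so the LIFO top $child\_queue.top$ coincides with the highest-priority queued child subtask. The nontrivial content here is entirely inherited from these two lemmas, so at this stage I am only assembling established facts rather than reproving them.

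Finally I would note that Algorithm~\ref{alg:scheduling_logic} does nothing more than compare $root\_queue.top.priority$ against $child\_queue.top.priority$ and dispatch whichever is greater, breaking ties toward the child queue. Since each top is the maximum over its own queue and the two queues jointly cover all eligible subtasks, the larger of the two tops equals the global maximum over all candidates; moreover, in the tie case both candidates are of maximal priority, so dispatching either still satisfies the claim. Hence the subtask dispatched by Algorithm~\ref{alg:scheduling_logic} is always a highest-priority eligible subtask. I expect the only delicate point to be justifying that the LIFO top genuinely coincides with the priority maximum of the child subtasks, but this is exactly what Lemmas~\ref{lemma:lifo_is_priority_insertion} and~\ref{lemma:lifo_is_priority} discharge, leaving the remainder a routine case split on the outcome of the comparison.
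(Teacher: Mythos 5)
Your proposal is correct and follows essentially the same route as the paper's proof: both reduce the claim to the fact that each queue's top is the priority maximum of that queue (trivially for the root heap, via Lemma~\ref{lemma:lifo_is_priority} for the LIFO child queue) and that the comparison in Algorithm~\ref{alg:scheduling_logic} therefore selects the global maximum over all released subtasks. The only cosmetic difference is that you cite Lemma~\ref{lemma:root_subtask_eligibility} for the two-queue partition of eligible subtasks, whereas that partition really follows directly from the release logic of Algorithm~\ref{alg:release_logic}; this does not affect the validity of the argument.
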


\begin{proof}
In Algorithm \ref{alg:scheduling_logic}, only the top subtasks of the roots queue and children queue are compared. Per Lemma~\ref{lemma:lifo_is_priority}, the children queue is functionally a priority queue. Therefore, the two head subtasks have higher priority than any other subtasks in their respective queues. The greater of the two is guaranteed to be a higher priority than any released subtasks.
Therefore, scheduling Algorithm \ref{alg:scheduling_logic} is guaranteed to schedule the highest priority released job.
\end{proof}

Additionally, this proof considers priorities of individual instances of subtasks to be independent from each other.
Therefore, it's compliant with any job-level fixed-priority scheduler.
This assumes that all constraints we detailed in Section~\ref{sec:system_model} are kept, and the DDS layer has been configured to release messages in LIFO order as well.

\subsection{Extension to non-LIFO queues}
\label{sec:non_lifo_analysis}

The events queue design in Section~\ref{sec:executor_design} above uses LIFO-ordering for new subscriber jobs.
However, we are not aware of any ROS~2 communication middleware vendors that support LIFO message queues. This means that even when jobs are executed in the correct order, the messages for them to process would be given in the wrong order. This would be equivalent to running different jobs of a specific subtask out of order, which is undesirable. This work provides a motivating usecase for the implementation of LIFO message queuing.
Even when LIFO queues are not available, a subset of applications are still schedulable. Specifically, if guarantees can be made that there will never be more than one released job of a task at a time, then the inversion from lacking a LIFO message queue never occurs. Such specific applications are detailed as follows.

\begin{lemma}\label{lemma:one_element_queue}
Any queue that always has at most one element is functionally a LIFO queue.
\end{lemma}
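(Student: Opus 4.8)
The statement to prove is almost tautological, so the plan is to unpack what "functionally a LIFO queue" should mean and verify that a queue holding at most one element satisfies it trivially. First I would fix the definition of functional equivalence being used throughout Section~\ref{sec:analysis}: two queue disciplines are functionally equivalent if, under every sequence of interleaved push and pop operations that is admissible for the system, they return the same element on each pop. The crucial observation is that all the ordering disciplines in play (LIFO, FIFO, and priority ordering) agree on how to handle a singleton: the only element present is simultaneously the most-recently-inserted, the least-recently-inserted, and the highest-priority element.

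The main body of the argument is a short case analysis on the queue state at each pop. I would argue that, given the invariant that the queue never exceeds one element, a pop can only ever be issued when the queue contains exactly one element (popping an empty queue is not a valid operation, and the invariant forbids two or more). In that sole case there is exactly one candidate to return, so the LIFO selection rule and the actual selection rule coincide pointwise. Since the returned element is identical on every pop regardless of discipline, the induced output sequences are identical, which is precisely functional equivalence. I would also note for completeness that the ``at most one element'' invariant is preserved across the push/pop interleaving assumed here, so the case analysis is exhaustive.

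I do not expect a genuine obstacle in the mathematics; the real work is interpretive rather than technical. The one point worth stating carefully is \emph{why} this lemma matters in context: the preceding LIFO-ordering results (Lemmas~\ref{lemma:lifo_is_priority_insertion} and~\ref{lemma:lifo_is_priority}) rely on the underlying DDS middleware delivering messages in LIFO order, which the paper has flagged as unavailable in practice. The subtle step is therefore connecting the singleton invariant to the \emph{message} queue rather than the event queue: when a subtask has at most one released-but-unserved job at any instant, its DDS message queue holds at most one pending message, so the discipline used to order that single message is immaterial and the priority-inversion failure identified in Section~\ref{sec:non_lifo_analysis} cannot arise. Making this linkage explicit is the part I would spend the most care on, since it is what lets the lemma discharge the LIFO assumption for the restricted class of applications described in the surrounding text.
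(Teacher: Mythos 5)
Your argument is correct and matches the paper's proof in essence: both rest on the observation that when the queue holds a single element, the only possible pop candidate coincides with the LIFO choice, so the disciplines are indistinguishable. Your additional framing (the formal notion of functional equivalence and the link to the DDS message queue) is a more careful elaboration of the same idea, not a different route.
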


\begin{proof}
In a queue of size one, the only element able to be removed is the head element, so removing it abides by the LIFO definition.
\end{proof}



\begin{lemma}\label{lemma:harmonic_parents}
If a subscription $C$ with response time $R_C$ is the only subscriber for a topic and its parents $A, B, \dots$, with response times $R_A, R_B, \dots$ are harmonic with a \emph{sufficient} phase shift, 
then the message queue for that subscription's topic will never have more than one message in it.
\emph{Sufficient} means that the taskset owning $A, B, C, \dots$ is strictly periodic and 
$$t_{RA} + R_A + R_C \leq t_{RB}$$
$$t_{RB} + R_B + R_C \leq t_{RA} + T_A,$$
where $t_{Rx}$ is the release time of $x$, occuring at timepoints $\phi_x + k T_x \mid k \in \mathbb{N}$. Without loss of generality, this extends to more than 2 parents by applying these conditions to each pair of parents.

\end{lemma}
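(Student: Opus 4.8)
The plan is to follow each message through $C$'s DDS queue and show that the in-queue lifetimes of successive messages are pairwise disjoint, so the queue never holds two messages at once. The first step is to pin down when a message enters and leaves the queue. A message is enqueued exactly when its publishing parent finishes, so a message from parent $A$ enters at $t_{RA}+R_A$. It is dequeued when the job of $C$ that it triggers is scheduled and reads it; since that job is released at $t_{RA}+R_A$ and has response time $R_C$, the read completes no later than $t_{RA}+R_A+R_C$. Thus the message occupies the queue during a subinterval of $[\,t_{RA}+R_A,\;t_{RA}+R_A+R_C\,]$, and likewise a message from $B$ occupies a subinterval of $[\,t_{RB}+R_B,\;t_{RB}+R_B+R_C\,]$.

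The second step is to show these occupancy windows do not overlap within one period of the schedule. For the message from $A$ followed by the message from $B$, I would chain the first hypothesis with $R_B\ge 0$:
$$t_{RA}+R_A+R_C \;\le\; t_{RB} \;\le\; t_{RB}+R_B,$$
so $A$'s message has certainly left the queue before $B$'s message arrives. Symmetrically, for $B$'s message followed by $A$'s release in the next period, the second hypothesis gives
$$t_{RB}+R_B+R_C \;\le\; t_{RA}+T_A \;\le\; (t_{RA}+T_A)+R_A,$$
so $B$'s message leaves before the next message from $A$ arrives. Hence over one period of $A$ at most one message is ever present.

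The third step is to lift this to all time. Because the taskset is strictly periodic and the parents are harmonic, the pattern of release times — and therefore of parent completions and of the triggered $C$-job completions — repeats identically across the hyperperiod; applying the two inequalities in every period shows the occupancy windows are globally pairwise disjoint, so the queue holds at most one message at every instant. Together with Lemma~\ref{lemma:one_element_queue} this makes the queue functionally LIFO. For more than two parents I would merge all parent completions within one period into a single ordered sequence $X_1, X_2, \dots$ and apply the pairwise ``sufficient phase shift'' condition to each consecutive pair (including the wrap-around from the last completion of one period to the first of the next), yielding the same disjointness.

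I expect the main obstacle to be the periodicity and harmonic bookkeeping rather than the inequalities themselves: I must argue precisely that harmonic periods plus strict periodicity make the release pattern repeat, so that verifying disjointness on one period certifies it for all time, and I must handle the general $n$-parent interleaving — where parents with shorter harmonic periods fire several times per slow period — by applying the pairwise conditions to every adjacent pair in the merged order of completions. A secondary subtlety is justifying that the message is read at the \emph{start} of $C$'s execution, so that $t_{RA}+R_A+R_C$ is a valid (conservative) upper bound on its removal time.
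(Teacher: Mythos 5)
Your proposal is correct and follows essentially the same argument as the paper: bound each message's occupancy of the queue by the interval from the parent's completion to the completion of the triggered job of $C$ (i.e., by $t_{RA}+R_A+R_C$), then use the two phase-shift inequalities to show that consecutive occupancy intervals are disjoint. You are merely more explicit than the paper about the periodic/inductive extension to all time and about the $n$-parent interleaving, which the paper dispatches with ``and so on'' and the lemma statement's ``without loss of generality'' remark.
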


\begin{proof}
A job/message $x$ is added to the queue at each job release, $t_{Rx}$, and will be removed by the time that job completes $t_{Rx} + R_x$.

The subscription task in question has a release time of $t_{RC} = t_{RA} + R_{A}$, where $A$ is the triggering parent. It is removed from the message queue by the pessimistic timepoint

$$t_{RA} + R_{A} + R_{C}$$

By the given conditions, this is earlier than the release of the next parent, $t_{RB}$. Therefore the queue will be empty by the time the next job is released. And that job will be removed before the next, and so on.

The events queue will only have either zero or one jobs of this subscription.
\end{proof}

\begin{theorem}\label{theorem:executor_works_with_one_element_queue}
Even if the message queue isn't configured to be LIFO order, the highest-priority subtask will always be scheduled next by Algorithm \ref{alg:scheduling_logic} if the conditions formulated in Lemma~\ref{lemma:harmonic_parents} hold.
\end{theorem}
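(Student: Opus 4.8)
The plan is to reduce this theorem to the already-established Theorem~\ref{theorem:executor_works} by showing that the conditions of Lemma~\ref{lemma:harmonic_parents} silently restore the one hypothesis that a non-LIFO DDS layer would otherwise violate. First I would isolate the precise role the LIFO message ordering played. The guarantee of Theorem~\ref{theorem:executor_works} has two independent ingredients: (a) Algorithm~\ref{alg:scheduling_logic} always selects the highest-priority \emph{event} from the two queues, which follows purely from the events queue being LIFO (Lemmas~\ref{lemma:lifo_is_priority_insertion}--\ref{lemma:lifo_is_priority}) and is entirely independent of how DDS orders messages; and (b) when a subscription event is dequeued, the DDS layer must hand it the message belonging to the \emph{correct job instance}, which is exactly what LIFO message ordering was assumed to provide. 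A non-LIFO DDS layer can only break ingredient (b)—it would pair the dequeued event with a stale message, effectively running a different job of the subtask out of order—while ingredient (a) continues to hold verbatim.

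Next I would discharge ingredient (b) using the chain of the two preceding lemmas. Under the harmonic and phase-shift conditions of Lemma~\ref{lemma:harmonic_parents}, every affected message queue contains at most one pending message at any instant. By Lemma~\ref{lemma:one_element_queue}, a queue that holds at most one element is functionally indistinguishable from a LIFO queue, since there is never an ambiguous choice of which message to deliver: there is only ever one candidate. Consequently the non-LIFO DDS layer behaves exactly as a LIFO layer would under these conditions, and the job-to-message pairing assumed in Theorem~\ref{theorem:executor_works} is recovered without any change to the middleware.

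Having re-established both ingredients (a) and (b), the hypotheses of Theorem~\ref{theorem:executor_works} are satisfied, so its conclusion—that the highest-priority subtask (and, crucially, the correct instance thereof) is always scheduled next by Algorithm~\ref{alg:scheduling_logic}—carries over unchanged. This is the desired statement.

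The main obstacle I anticipate is not the chaining of the lemmas, which is routine, but making the scope of Lemma~\ref{lemma:harmonic_parents} explicit enough to cover the whole system. The theorem's hypothesis must be read as requiring the single-message property for \emph{every} subscription whose topic could otherwise accumulate more than one message, not merely one distinguished subscription, so I would state carefully that the conditions are assumed to hold simultaneously across all such topics in the unfolded forest. I would also note that a single-publisher subscription whose response time fits within one interarrival window satisfies the one-message property automatically, so only multi-publisher (fusion) topics genuinely need the harmonic conditions. The remaining care is simply to confirm that this per-topic condition covers precisely those places where a FIFO/LIFO discrepancy could have produced an out-of-order job-to-message assignment.
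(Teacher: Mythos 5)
Your proposal is correct and follows essentially the same route as the paper's own proof: apply Lemma~\ref{lemma:harmonic_parents} to bound the message queue at one element, invoke Lemma~\ref{lemma:one_element_queue} to conclude it is functionally LIFO, and then apply Theorem~\ref{theorem:executor_works}. Your additional separation of the event-ordering and message-pairing roles of LIFO, and the remark that the condition must hold for every affected topic, are useful clarifications but do not change the argument.
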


\begin{proof}
If the conditions in Lemma~\ref{lemma:harmonic_parents} are true, then the message queue has a maximum size of one. 
Consequently, due to Lemma~\ref{lemma:one_element_queue}, 
the queue is functionally a LIFO queue. 
Hence, Theorem~\ref{theorem:executor_works} can be applied.
\end{proof}

\subsection{Applying Existing Schedulability Tests and Response Time Analysis}

Our executor design implements a LP-FJP policy, where subtasks run non-preemptively, but preemption can occur between subtasks. Furthermore, as shown in Section~\ref{sec:graph_unfolding}, ROS~2 graphs can be modeled as a forest of DAGs (trees) for analysis. Therefore, established analyses for LP-FJP scheduling of DAG tasks on uniprocessors are directly applicable. 

Specifically, the analysis provided by Nasri et al.~\cite{nasri2019response} addresses this exact model (LP-FJP for DAGs). Their analysis tool~\cite{npscheduabilityanalysis} considers factors like release jitter and execution jitter, and supports multiprocessor analysis, though we focus on the uniprocessor case here. We utilize this specific analysis~\cite{nasri2019response} in Section~\ref{sec:evaluations} to calculate the theoretical worst-case response times (WCRTs) for comparison with our experimental results. Using an analysis designed for the target scheduling policy (LP-FJP DAGs) is crucial for accurate validation, as opposed to analyses developed for different policies like the default ROS~2 executor~\cite{casini2019response}.

When applying these analyses, one should use the \textbf{eligibility time} (the time a subtask \emph{could} run if the executor were free and it had highest priority) as the conceptual \textbf{release time} for the analysis calculation. The term `release time' used elsewhere in this paper refers to the point in time when a subtask is enqueued into the events queue, which occurs at scheduling points.


\section{Evaluations}\label{sec:evaluations}

This evaluation is divided in two parts. In the first part, we implemented the queue detailed in Section~\ref{sec:executor_design} for a RM fixed priority scheduler, as well as one for EDF. 
We first apply the analysis provided by \cite{nasri2019response} to a synthetic taskset. If our executor truly maps to a Limited Preemption Fixed Job-Level Priority scheduler, we would expect their analysis to accurately bound WCRT. This bound is compared to an empirical evaluation is made in Section~\ref{sec:synthetic_evaluation}.

We then present a more empirical measure of performance by assessing the performance of our proposed executor design from Section~\ref{sec:executor_design} compared to the existing events executor, as well as the default ROS~2 executor. We use the Autoware Reference Benchmark, which is intended to compare the performance of different executor designs.

We are limited to a synthetic taskset and the Autoware Reference Benchmark because, as mentioned in Section~\ref{sec:non_lifo_analysis}, no communication middleware available provides LIFO-ordered message queueing, which violates the assumptions found in Lemma~\ref{lemma:lifo_is_priority}.
The synthetic evaluation found in Section~\ref{sec:synthetic_evaluation} is data-agnostic, so by tracing scheduling decisions, the hypothetically LIFO-ordered child subtasks can be reassigned to their correct parents in evaluation post-processing. The Autoware Reference Benchmark in Section~\ref{sec:autoware_evaluation} satisfies the condition in Lemma~\ref{lemma:harmonic_parents}, therefore our analysis holds, according to Theorem~\ref{theorem:executor_works_with_one_element_queue}.

All experiments are ran on a Raspberry Pi Model 4, pinned to a single core with the timer's management thread in the events executor assigned a higher priority, as suggested in \cite{teper2025reconciling}. We report the 99.7th percentile of response times to exclude rare outliers potentially caused by system noise (e.g., OS jitter) rather than by the scheduling algorithm itself. All experiments are conducted for 10 minutes, over multiple hyperperiods.

\subsection{Synthetic Many-to-Many Topic}\label{sec:synthetic_evaluation}

To demonstrate that our analysis holds for an unfolded graph, we replicate the same application shown in Figure~\ref{fig:ros2_unfolding} and vary the WCETs of subtasks to create 3 different utilization levels. Utilization levels are approximate and rounded to even numbers for simplicity. Since this experiment does not use a LIFO-compliant middleware, we take measures to counteract the effect where subsequence jobs from a specific subtask are misordered. Since all data is arbitrary in this synthetic evaluation, jobs of a particular subtask are fungible. That is, if job $x_i$ of subscription subtask $x$ is scheduled to run, it does not matter if it is given the data for job $x_j$ instead, provided the WCET and priority match the intended job that should run.
In post-processing, we identified when each child subtask was released (i.e., when its parent completed) and when it was scheduled. The response time is calculated as the difference between these two timestamps. This correctly measures the scheduling delay introduced by our executor, bypassing the message-data mismatch that would occur from a FIFO message queue.

Each application is centered around a single topic with three publishers and two subscribers. Each publisher has a different period. After unfolding, we can model this application as a forest of 3 trees, each with 1 parent and 2 children. We created three variations of this taskset for different utilization levels of (approximately) 50\%, 70\%, and 90\%. The WCET times for each subtask in each taskset are illustrated in Table~\ref{table:taskset_definitions}.





\begin{table}
  \centering
  \caption{WCET of Subtasks in Synthetic Tasksets}
  \label{table:taskset_definitions}
  \begin{tabular}{c|ccc|cc}
    & \multicolumn{3}{c}{Parent by Period} & \multicolumn{2}{c}{Children} \\
    Utilization & 25ms & 41ms & 51ms & A & B\\
    \hline
    50\% & 1ms & 1ms & 5ms & 2ms & 2ms \\
    70\% & 2ms & 1ms & 12ms & 2ms & 2ms \\
    90\% & 3ms & 1ms & 12ms & 2ms & 4ms \\
  \end{tabular}
\end{table}

The unfolded forest is entered into an analysis tool~\cite{nasri2019response,npscheduabilityanalysis}, which provides a WCRT across the entire tree for each job with both RM and EDF scheduling.
%
%
%
%
\begin{figure}
    \centering
    \includegraphics[width=0.5\linewidth]{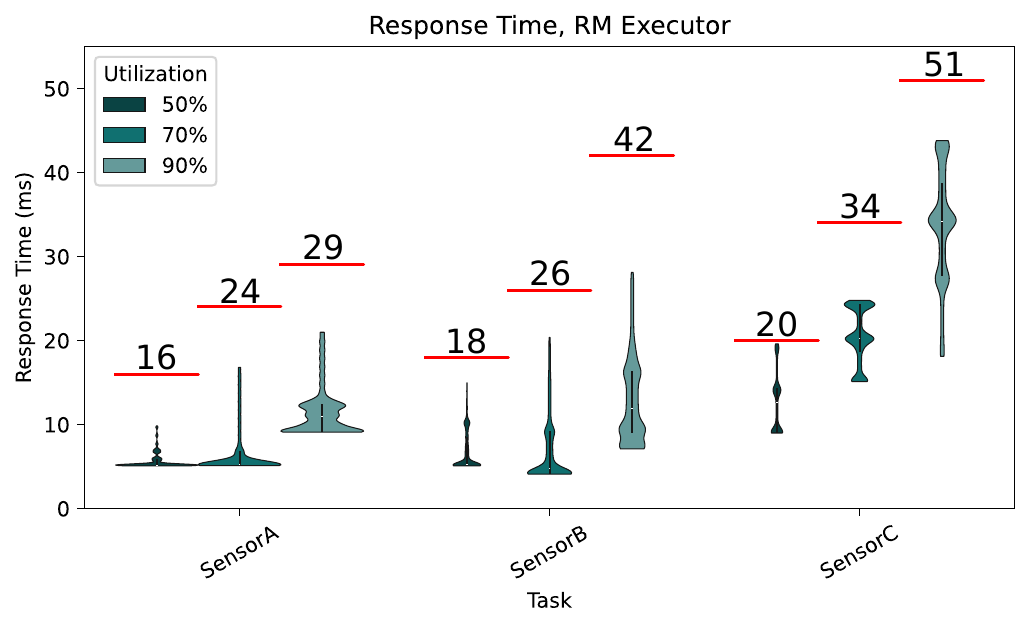}%
    \includegraphics[width=0.5\linewidth]{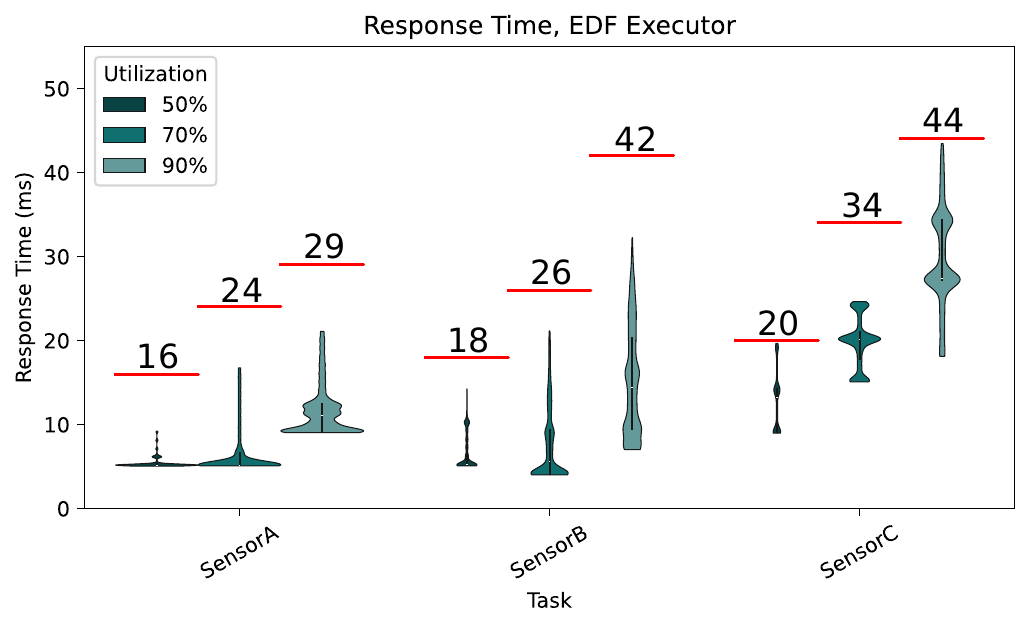}
    \caption{Response times for the synthetic task set compared against the WCRT bounds from analysis~\cite{nasri2019response}. Task A, B, and C correspond to the root tasks with periods 25ms, 41ms, and 51ms, respectively. Red lines indicate the analytical WCRT.}
    \label{fig:many_to_many_latencies}
\end{figure}
Results are shown in Figure~\ref{fig:many_to_many_latencies}. Our measured WCRTs closely match the predicted WCRTs from the analysis. The analysis for the 90\% utilization scenario with RM scheduling predicts deadline misses for Task A and Task B, which highlights that the task set is not schedulable under these conditions. Our executor's behavior aligns with this analysis.

\subsection{Autoware Reference Benchmark}\label{sec:autoware_evaluation}

We also compare our performance to existing executors in the Autoware Reference Benchmark, a simulated version of the Autoware application detailed in~\cite{referencesystem}, which includes interconnected timers and subscriptions.\footnote{\url{https://github.com/ros-realtime/reference-system/blob/main/autoware_reference_system/README.md}}
Specifically, we measure the end-to-end latency of the \textbf{hot path} defined in the benchmark from the front and rear LiDAR sensors to the object collision estimator (c.f. the Autoware reference system), as this latency is a key metric for responsiveness in crash avoidance.
The data at the start of the hot path is generated at a frequency of 10Hz.

Results are shown in Figure~\ref{fig:autoware_evaluation}. The default and events executor perform the worst, but are still consistently within the implicit 100ms deadline of the hotpath. However, with our RM and EDF modifications to the events executor, the worst case is consistently under 25ms. This represents a 63\% improvement over the default executor, and a 61\% improvement over the events executor. In addition to a better WCET, there is less variability as well.

\begin{figure}
    \centering
    \includegraphics[width=.8\linewidth]{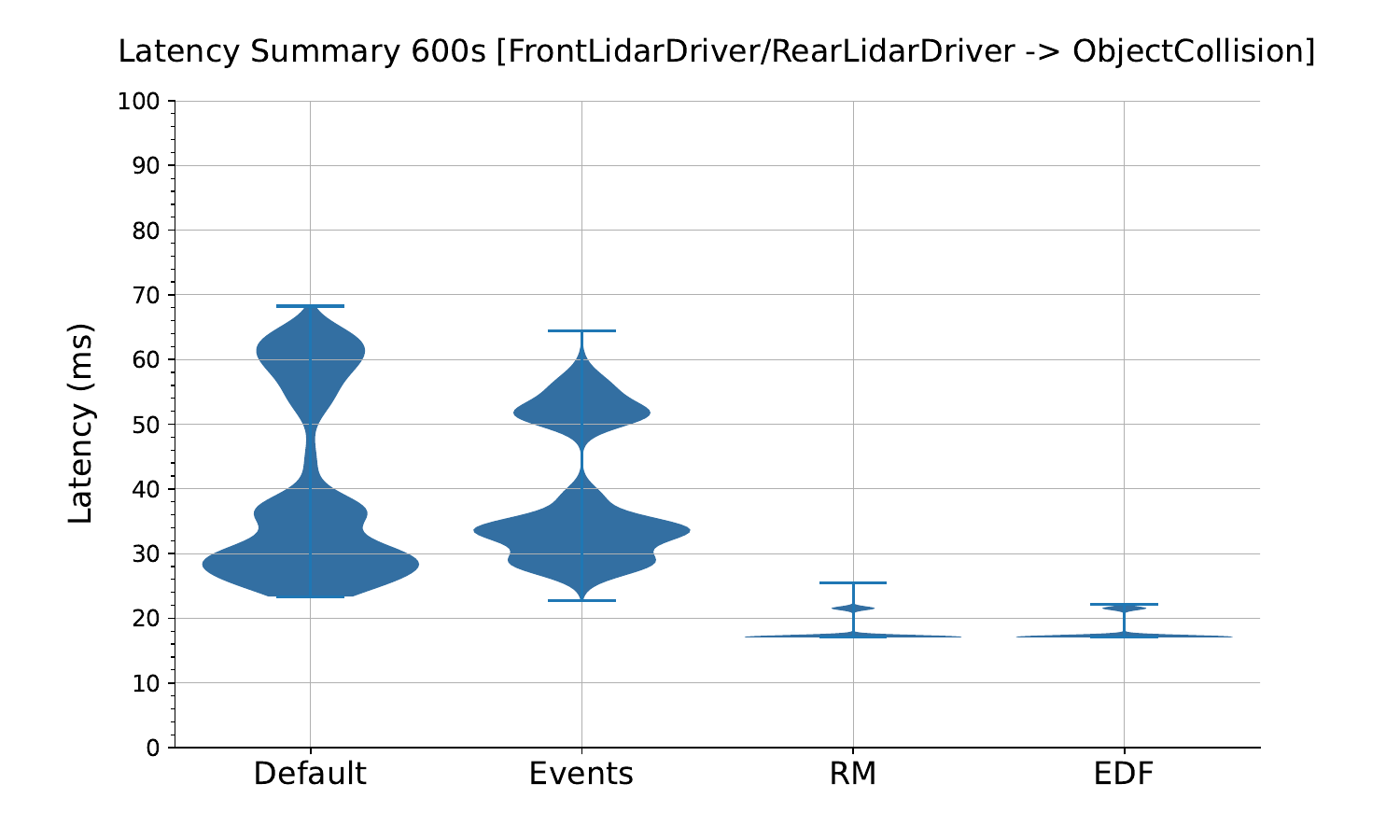}
    \caption{Autoware Reference Benchmark Executor Comparison}
    \label{fig:autoware_evaluation}
\end{figure}

\section{Conclusions and Future Work}\label{sec:conclusions}


Prior literature \cite{choi2021picas,casini2019response,tang2020response,blass2021ros,teper2022end} has catered to the limitations of the default ROS~2 executor. They quantify its behavior and propose multithreaded work-arounds rather than directly addressing the blocking caused by the processing window. In contrast, our work builds on the events executor, which isn't affected by the processing window. Scheduling decisions are made after each executed job, making it compliant with a limited-preemption scheduler. Additionally, the response-time analysis provided by prior literature has focused on chains. Our work not only provides support for general branching graphs, but it does so with minimal retooling required and without the need for a priori application analysis and graph decomposition, as is the case with PiCAS\cite{choi2021picas}.

This paper introduces a novel abstraction with which ROS~2 applications can be analyzed as forests of trees.
It also presents a new design for an events queue that supports arbitrary fixed-job-level-priority schedulers, and naturally honors the precedence relations between subtasks without needing to be aware of them.
Finally, it provides an analysis proving that the events executor using our queue is behaviorally identical to a conventional scheduler operating on the unfolded equivalent of our application. Our analysis and empircal evaluation confirm that the schedules created are identical (subject to arbitrary tie-breaks).
%
%
This further closes the gap between established real-time literature and ROS~2 scheduling analysis. 

The primary limitation to this work is the necessity for a LIFO message queue, which is not to our knowledge provided by any commercially available or open-source DDS middleware. Integrating that feature into an existing open-source middleware framework is a natural direction for future work, to broaden the applicability of the approach presented here beyond the currently targeted set of applications meeting Theorem~\ref{theorem:executor_works_with_one_element_queue}'s conditions (e.g., harmonic periods). Future work could also explore extending this model to mixed-criticality systems or handling other preemption constraints.





\bibliographystyle{abbrv}
\bibliography{references}

\end{document}